\documentclass[a4paper, 12pt]{article}
\usepackage{a4wide} 
\usepackage{graphicx}
\usepackage{pgfplots}
\usepackage{here, amsmath, latexsym, amssymb, bm, ascmac, mathtools, multicol, tcolorbox, subfig, amsthm, natbib}

\mathtoolsset{showonlyrefs}
\theoremstyle{definition}

\newtheorem{definition}{Definition}
\newtheorem{theorem}{Theorem}
\newtheorem{proposition}{Proposition}

\newtheorem{lemma}{Lemma}

\usepackage[colorlinks=true, linkcolor=blue, citecolor=blue]{hyperref}
\usepackage{accents}

\title{Acquiring irrelevant information as a commitment\thanks{We would like to thank Daiki Kishishita for his helpful comments.}}
\author{Wataru Kitano\thanks{Graduate School of Management, Tokyo University of Science, 1-11-2 Fujimi, Chiyoda-ku, Tokyo 102-0071, Japan, wataru.kitano9998@gmail.com} and Shohei Yanagita\thanks{The Department of Economics, Takasaki City University of Economics, 1300 Kaminamie, Takasaki, Gumma 370-0801, Japan, shoheiyanagita@gmail.com}}
\date{\today}

\makeatother

\pgfplotsset{compat=1.18}
\begin{document}

\maketitle

\begin{abstract}
We formulate the voter's strategic information acquisition to control the future self's action as a Bayesian persuasion problem.
Our main result shows that acquiring information that is irrelevant to the voter's objective can be a \textit{worst-case optimal} solution: it can reduce the possibility that the future self is swayed by additional information whose content is ambiguous to the current voter.
\end{abstract}

{\bf Keywords:} Bayesian persuasion; Robustness; Self-control

{\bf JEL classification:} D81, D83, D91

\newpage

\section{Introduction}

Information acquisition is crucial for making better decisions. Since information provides guidance about the optimal choice, acquiring more information is generally considered desirable. However, individuals may sometimes be exposed to irrelevant information, which can confuse decision-making and lead them away from the optimal choice. In light of this concern, individuals may need to control the information they receive.

A natural example arises in elections. Voters aim to choose better politicians by gathering information about candidates before voting. The relevant information reflects candidates' political competence, such as their policy proposals, and voters are expected to base their decisions on this. However, in reality, decision-making can be influenced by irrelevant aspects of candidates, such as scandals, that are unrelated to their competence \citep[see, for example, ][]{rienks2023corruption}.

When such a temptation is foreseen, voters might try to control the information their future selves will access. For example, they may mute certain topics or block media sources that focus on scandals. Still, the future self might access additional irrelevant information beyond what was initially provided. Considering this, a natural conjecture is that the current voter should select an information structure that completely suppresses irrelevant information.

We show that this intuition need not hold. When there is uncertainty about whether the future self will acquire additional irrelevant information, the current voter may have an incentive to disclose it ex ante. If irrelevant information is entirely suppressed, the future self could conduct additional searches and come across information that further harms decision-making. Conversely, by collecting irrelevant information in a controlled way, the current voter can limit the room for gathering extra information.

We model this \textit{intrapersonal} game within an information design framework and establish a strategic foundation for demanding irrelevant information. More precisely, we adopt the framework of robust Bayesian persuasion proposed by \cite{dworczak2022preparing}. This framework is relatively recent and, to the best of our knowledge, has not yet been applied to self-control problems of this kind.
 
Our results complement the existing literature that provides alternative justifications for the acquisition of irrelevant information. We show that what is often viewed as a bias can, in fact, be a rational response to anticipated future behavior. This is a novel explanation for the acquisition of irrelevant information, which has been attributed to a receiver's bias or incentive to coordinate with other players.\footnote{See \cite{gentzkow2015media} for details about the receiver's bias and \cite{myatt2012endogenous} for information acquisition with coordination motives.}

\section{Model}
We consider a voter's self-control problem inspired by Example 1 (a) of \cite{jakobsen2025temptation}.
The current voter (henceforth CV) anticipates that the future self (henceforth FV) faces the problem of choosing whether to approve a candidate.
CV desires the approval decision to be based only on the content of the candidate's policy proposal.
However, CV is aware that FV may be affected by irrelevant features such as scandals.

To mitigate such a temptation, CV controls the information revealed to FV, for example, by customizing the news feed on CV's own social media account.
Moreover, CV anticipates that FV receives additional information generated through an unpredictable process, such as news recommended by a black-box algorithm on social media.

The model is a special case of \cite{dworczak2022preparing} with a particular player set, state space, action space, and payoffs.
There is a state space $\Omega=\Omega_R\times\Omega_I$ where $\Omega_R=\{G,B\}$ and $\Omega_I=\{g,b\}$, and each state is denoted by $\omega=(\omega_R,\omega_I)\in\Omega$.
CV and FV share the common prior $\mu_0\in\Delta\Omega$ such that $\mu_0(\omega)=1/4,\forall\omega\in\Omega$.
FV chooses an action $a$ from $A=\{1,0\}$, where action $1$ (resp. $0$) represents approval (resp. rejection) of a candidate.
CV's payoff function $v$ is defined on $A\times\Omega$ where $v(1,(G,g))=v(1,(G,b))=x$, $v(1,(B,g))=v(1,(B,b))=-y$, $v(0,\omega)=0,\forall\omega\in\Omega$, and $x,y>0$.
FV's payoff function $u$ is defined on $A\times\Omega$ where $u(1,(G,g))=x+m$, $u(1,(G,b))=x-n$, $u(1,(B,g))=m-y$, $u(1,(B,b))=-y-n$, $u(0,\omega)=0,\omega\in\Omega$, and $m,n>0$.
That is, CV desires the approval decision to be based only on whether $\omega_R$ is $G$ or $B$, while FV may be swayed by whether $\omega_I$ is $g$ or $b$.
Thus, the component $\Omega_R$ of the state represents whether the content of the candidate's policy proposal is good or bad for the voter, while $\Omega_I$ represents the feature that is irrelevant to the quality of the policy but affects FV's choice.

Define the set $\Pi$ of information structures.
That is, $\Pi$ consists of every tuple $(M,\pi)$ such that $M$ is a finite set of messages and $\pi:\Omega\to\Delta M$ is a function that maps each state realization $\omega$ to a distribution $\pi_\omega(\cdot)$ over messages in $M$.
Moreover, for each finite set $M$, define the set $\Pi'_M$ of \textit{additional} information structures.
That is, $\Pi'_M$ consists of each tuple $(M',\pi')$ such that $M'$ is a finite set of messages and $\pi':\Omega\times M\to\Delta M'$ maps each state $\omega$ and message $m\in M$ to the distribution $\pi'_{\omega,m}(\cdot)$ of the messages in $M'$.\footnote{
In \cite{dworczak2022preparing}, the message spaces $M$ and $M'$ are exogenously given, but assumed to be finite and sufficiently large.
Adopting this supposition, instead of our definition, does not affect the analysis.
}

In this game, first, CV chooses an information structure $(M,\pi)\in\Pi$. 
After that, Nature chooses the additional information structure $(M',\pi')\in\Pi'_M$.
Note that the distribution $\pi'$ can be contingent not only on the state realization $\omega\in\Omega$ but also on the message realization $m\in M$: this captures the possibility that FV may receive additional information that refines the original information $(M,\pi)$.

After CV and Nature choose information structures, a state $\omega\in\Omega$ is realized.
Then, a message $m\sim\pi_\omega(\cdot)$ is realized and FV observes $(M,\pi)$ and $m$, leading to Bayesian updating from the prior $\mu_0$ to the \textit{interim belief} $\mu_0^m$:
\[
\mu_0^m(\omega)=\frac{\pi_\omega(m)\mu_0(\omega)}{\sum_{\omega'\in\Omega}\pi_{\omega'}(m)\mu_0(\omega')}.
\]

Next, the message $m'\sim\pi'_{\omega,m}(\cdot)$ is realized and FV observes $(M',\pi')$ and $m'$, leading to Bayesian updating from the interim belief $\mu_0^m$ to the posterior $\mu_0^{m,m'}$:
\[
\mu_0^{m,m'}(\omega)=\frac{\pi'_{\omega,m}(m')\mu_0^m(\omega)}{\sum_{\omega'\in\Omega}\pi'_{\omega',m}(m')\mu_0^m(\omega')}.
\]
Finally, FV chooses an action $a\in A$.

For each belief $\mu$, define the set $A^*(\mu)\subseteq A$ of actions by 
\[
A^*(\mu):=\arg\max_{a\in A}\sum_{\omega\in\Omega}u(a,\omega)\mu(\omega).
\]
That is, $A^*(\mu)$ is the set of optimal actions for FV under the belief $\mu$.

\section{Analysis}
\subsection{Indirect payoffs}
First, CV's \textit{optimistic payoff} from each information structure $(M,\pi)\in\Pi$ is given by 
\[
\hat{v}(M,\pi):=\sum_{\substack{\omega\in\Omega\\m\in M}}\hat{V}(\mu_0^m)\pi_\omega(m)\mu_0(\omega),
\]
where 
\[
\hat{V}(\mu):=\max_{a\in A^*(\mu)}\sum_{\omega\in\Omega}v(a,\omega)\mu(\omega),\forall\mu\in\Delta\Omega.
\]
That is, $\hat{v}(M,\pi)$ is CV's indirect payoff from $(M,\pi)\in\Pi$ when Nature provides no information, i.e., $|M'|=1$, and FV's tie-breaking rule is optimal for CV.

Next, CV's \textit{pessimistic payoff} from each information structure $(M,\pi)\in\Pi$ is given by 
\[
\underaccent{\check}{v}(M,\pi):=\inf_{(M',\pi')\in\Pi'_M}\left[\sum_{\substack{\omega\in\Omega\\m\in M}}\left(\sum_{m'\in M'}\underline{V}(\mu_0^{m,m'})\pi'_{\omega,m}(m')\right)\pi_\omega(m)\mu_0(\omega)\right],
\]
where 
\[
\underline{V}(\mu):=\min_{a\in A^*(\mu)}\sum_{\omega\in\Omega}v(a,\omega)\mu(\omega).
\]
That is, $\underaccent{\check}{v}(M,\pi)$ is CV's indirect payoff from $(M,\pi)\in\Pi$ when Nature provides the worst additional information structure for CV and FV's tie-breaking rule is the worst for CV.

\subsection{Simplified problem}
Define the set $W\subseteq\Pi$ of information structures by
\[
W:=\arg\max_{(\bar{M},\bar{\pi})\in\Pi}\underaccent{\check}{v}(\bar{M},\bar{\pi}).
\]
\begin{definition}
An information structure $(M,\pi)\in\Pi$ is worst-case optimal if $(M,\pi)\in W$.
\end{definition}

\cite{dworczak2022preparing} provide a useful characterization of the set of the worst-case optimal information structures as follows.
Note that each message $m\in M$ that occurs with positive probability uniquely induces an interim belief $\mu^m$, and hence each information structure $(M,\pi)$ uniquely induces a distribution over interim beliefs.
Since $M$ is finite, each interim belief distribution induced by an information structure has finite support. 
Accordingly, from now on we focus on the set $\Delta_f \Delta\Omega$ of interim belief distributions with finite support, formally defined as
\[
\Delta_f \Delta\Omega=\left\{ \tau\in\Delta\Delta\Omega:|\mathrm{Supp}(\tau)|<\infty \right\}.
\]
Proposition 1 of \cite{kamenica2011bayesian} straightforwardly implies that a distribution $\tau\in\Delta_f \Delta\Omega$ is induced by some information structure in $\Pi$ if and only if $\tau$ satisfies \textit{Bayes plausibility}:
\begin{equation}\label{Eq_BP}\tag{BP}
\sum_{\mu\in\mathrm{Supp}(\tau)}\mu(\omega)\tau(\mu)=\mu_0(\omega),\forall\omega\in\Omega.
\end{equation}
Based on this observation, \cite{dworczak2022preparing} directly characterize the set of interim belief distributions, instead of the set of information structures.

To this end, CV's pessimistic payoff from each interim belief distribution $\tau\in\Delta_f\Delta\Omega$ induced by an information structure is given by 
\[
\sum_{\mu\in\mathrm{Supp}(\tau)}\underaccent{\check}{V}(\mu)\tau(\mu),
\]
where 
\[
\underaccent{\check}{V}(\mu):=\inf_{(M,\pi)\in\Pi}\sum_{\substack{\omega\in\Omega\\m\in M}}\underline{V}(\mu^m)\pi_\omega(m)\mu(\omega)
\]
and
\[
\mu^m(\omega)=\frac{\pi_\omega(m)\mu(\omega)}{\sum_{\omega'\in\Omega}\pi_{\omega'}(m)\mu(\omega')},\forall\omega\in\Omega.
\]
That is, $\underline{V}(\mu)$ is CV's pessimistic indirect payoff from each interim belief $\mu$ induced by an information structure.

As a benchmark, we define CV's indirect payoff from each interim belief $\mu$ when Nature fully reveals the state. 
Formally, it is given by 
\[
\underline{V}_{\text{full}}(\mu):=\sum_{\omega\in\Omega}\underline{V}(\delta_\omega)\mu(\omega),
\]
where $\delta_\omega\in\Delta\Omega$ is the belief that allocates probability one to the state $\omega$.
\cite{dworczak2022preparing} show that the worst-case optimal payoff coincides with the full-disclosure payoff.
\begin{lemma}{\citep{dworczak2022preparing}}
For each information structure $(M,\pi)\in\Pi$, the following are equivalent:
\begin{itemize}
\item $(M,\pi)\in W$.
\item $\tau_{(M,\pi)}\in\mathcal{W}$, where $\tau_{(M,\pi)}\in\Delta_f\Delta\Omega$ is an interim belief distribution induced by $(M,\pi)$ and
\[
\mathcal{W}:=\left\{\tau'\in\Delta_f\Delta\Omega\mid \sum_{\mu\in\mathrm{Supp}(\tau')}\underaccent{\check}{V}(\mu)\tau'(\mu)=\underline{V}_{\text{full}}(\mu_0) \text{ and } \tau' \text{ satisfies (\ref{Eq_BP})} \right\}.
\]
\end{itemize}
\end{lemma}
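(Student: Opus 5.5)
The equivalence follows from two facts: the pessimistic payoff of any structure equals the $\tau$-average of $\underaccent{\check}{V}$ and is bounded above by $\underline{V}_{\text{full}}(\mu_0)$, and this bound is attained by full disclosure. Then $W$ is exactly the set of structures whose induced distribution attains $\underline{V}_{\text{full}}(\mu_0)$, which is the definition of $\mathcal{W}$ together with (\ref{Eq_BP}).

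\textbf{Step 1 (decomposition).} For a given $(M,\pi)$, I show that
\[
\underaccent{\check}{v}(M,\pi)=\sum_{\mu\in\mathrm{Supp}(\tau_{(M,\pi)})}\underaccent{\check}{V}(\mu)\tau_{(M,\pi)}(\mu).
\]
Since $\pi'$ may depend on $m$, Nature's infimum separates across interim messages. For each $m$, Nature chooses a structure conditional on $m$, which is the same as choosing an information structure under the interim belief $\mu_0^m$. Messages inducing the same interim belief can be given the same worst-case structure. Pooling them by interim belief then yields the displayed identity.

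One subtlety is that the infimum defining $\underaccent{\check}{V}$ may not be attained. I handle this with $\varepsilon$-optimal choices for each of the finitely many messages, so the infima still add.

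\textbf{Step 2 (upper bound).} For every belief $\mu$, I show $\underaccent{\check}{V}(\mu)\le \underline{V}_{\text{full}}(\mu)$. This holds because Nature may choose full revelation, whose posteriors are $\delta_\omega$ with probabilities $\mu(\omega)$, giving payoff exactly $\underline{V}_{\text{full}}(\mu)$. Because $\underline{V}_{\text{full}}$ is linear in $\mu$, averaging over $\tau$ and applying (\ref{Eq_BP}) gives
\[
\underaccent{\check}{v}(M,\pi)\le \underline{V}_{\text{full}}(\mu_0)
\]
for every structure.

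\textbf{Step 3 (attainment).} Let CV fully reveal the state. Each interim belief is then some $\delta_\omega$. Any further information leaves $\delta_\omega$ unchanged, so $\underaccent{\check}{V}(\delta_\omega)=\underline{V}(\delta_\omega)$. The pessimistic payoff of full disclosure is therefore exactly $\underline{V}_{\text{full}}(\mu_0)$.

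\textbf{Conclusion.} By Steps 2 and 3, $\max_{\Pi}\underaccent{\check}{v}=\underline{V}_{\text{full}}(\mu_0)$, and this maximum is attained. Hence $(M,\pi)\in W$ if and only if $\underaccent{\check}{v}(M,\pi)=\underline{V}_{\text{full}}(\mu_0)$. By Step 1, this holds if and only if $\tau_{(M,\pi)}$ attains that value. Since $\tau_{(M,\pi)}$ automatically satisfies (\ref{Eq_BP}), this is exactly $\tau_{(M,\pi)}\in\mathcal{W}$.

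\textbf{Main obstacle.} Step 1 is the delicate part. It requires carefully identifying Nature's message-contingent choice with independent choices at each interim belief, and it requires the separate infima to add correctly through the $\varepsilon$-argument. Steps 2 and 3 are immediate.
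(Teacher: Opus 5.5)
Your proposal is correct and follows the same route as the paper: the bound $\underaccent{\check}{v}(M,\pi)\le\underline{V}_{\text{full}}(\mu_0)$ from Nature's option to fully reveal the state, the attainment of this bound by CV's full-disclosure structure $(\Omega,\pi^{\text{full}})$, and the identity $\underaccent{\check}{v}(M,\pi)=\sum_{\mu\in\mathrm{Supp}(\tau_{(M,\pi)})}\underaccent{\check}{V}(\mu)\tau_{(M,\pi)}(\mu)$. The paper asserts that identity and the attainment without argument, so your Step 1 (separating Nature's infimum message by message, with the $\varepsilon$-argument) and your Step 3 make explicit what the paper leaves implicit.
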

Although the proof follows straightforwardly from Lemma 1 of \cite{dworczak2022preparing}, we provide a proof because our statement slightly differs from theirs.
\begin{proof}
Pick an information structure $(M,\pi)\in\Pi$.
Note that
\[
\sum_{\mu\in\mathrm{Supp}(\tau_{(M,\pi)})}\underaccent{\check}{V}(\mu)\tau_{(M,\pi)}(\mu)\leq \underline{V}_{\text{full}}(\mu_0)
\]
because Nature can always fully reveal the state.
Thus, if $\tau_{(M,\pi)}\in\mathcal{W}$, then
\[
\sum_{\mu\in\mathrm{Supp}(\tau_{(M,\pi)})}\underaccent{\check}{V}(\mu)\tau_{(M,\pi)}(\mu)=\underline{V}_{\text{full}}(\mu_0),
\]
and hence $(M,\pi)\in W$.
If $\tau_{(M,\pi)}\not\in\mathcal{W}$, then $(M,\pi)\not\in W$ because
\begin{align*}
\underaccent{\check}{v}(M,\pi)&=\sum_{\mu\in\mathrm{Supp}(\tau_{(M,\pi)})}\underaccent{\check}{V}(\mu)\tau_{(M,\pi)}(\mu)\\ 
&<\underline{V}_{\text{full}}(\mu_0)\\
&=\underaccent{\check}{v}(\Omega,\pi^{\text{full}}),
\end{align*}
where $(\Omega,\pi^{\text{full}})\in\Pi$ is an information structure such that $\pi^{\text{full}}_{\omega}(\omega)=1,\forall\omega\in\Omega$, which fully reveals the true state.
\end{proof}

Next, they characterize the set $\mathcal{W}$ as follows.
For each function $v:\Delta\Omega\to\mathbb{R}$ and each set $Y\subseteq\Delta\Omega$, let $v\mid_Y:Y\to\mathbb{R}$ be a function such that $v\mid_Y(y)=v(y),\forall y\in Y$.
The following result offers a useful tool to identify what states are to be separated or pooled in worst-case optimal information structures:
\begin{lemma}{\citep{dworczak2022preparing}}\label{Lem_test}
It holds that 
\[
\mathcal{W}=\left\{\tau\in\Delta_f\Delta\Omega\mid \tau\text{ satisfies \eqref{Eq_BP}} \text{ and } \forall\mu\in\mathrm{Supp}(\tau),\mathrm{Supp}(\mu)\in\mathcal{F} \right\}
\]
where 
\[
\mathcal{F}=\left\{S\subseteq\Omega: \underline{V}\mid_{\Delta S}\geq\underline{V}_{\text{full}}\mid_{\Delta S} \right\}.
\]
\end{lemma}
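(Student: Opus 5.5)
By the previous lemma, $\mathcal{W}$ is the set of Bayes-plausible $\tau$ with $\underaccent{\check}{V}(\mu)$ averaging to $\underline{V}_{\text{full}}(\mu_0)$. The plan is to show that this holds if and only if every interim belief in the support has support in $\mathcal{F}$. The first step is a pointwise reduction. For every $\mu\in\Delta\Omega$ we have $\underaccent{\check}{V}(\mu)\le\underline{V}_{\text{full}}(\mu)$, since Nature may fully reveal the state. Moreover, $\underline{V}_{\text{full}}$ is linear in $\mu$, so (BP) gives $\sum_\mu \underline{V}_{\text{full}}(\mu)\tau(\mu)=\underline{V}_{\text{full}}(\mu_0)$. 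Hence for a Bayes-plausible $\tau$, the equality $\sum_\mu\underaccent{\check}{V}(\mu)\tau(\mu)=\underline{V}_{\text{full}}(\mu_0)$ holds if and only if $\underaccent{\check}{V}(\mu)=\underline{V}_{\text{full}}(\mu)$ for every $\mu\in\mathrm{Supp}(\tau)$. It therefore suffices to prove the pointwise claim
\[
\underaccent{\check}{V}(\mu)=\underline{V}_{\text{full}}(\mu)\iff \mathrm{Supp}(\mu)\in\mathcal{F}.
\]

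For ($\Leftarrow$), let $S=\mathrm{Supp}(\mu)\in\mathcal{F}$. Any additional information structure applied at $\mu$ generates posteriors $\mu^m$ with $\mathrm{Supp}(\mu^m)\subseteq S$, so $\mu^m\in\Delta S$. By the definition of $\mathcal{F}$, $\underline{V}(\mu^m)\ge\underline{V}_{\text{full}}(\mu^m)$. Averaging over messages and using linearity of $\underline{V}_{\text{full}}$ together with Bayes plausibility of the posteriors around $\mu$, every Nature choice yields at least $\underline{V}_{\text{full}}(\mu)$. Thus $\underaccent{\check}{V}(\mu)\ge\underline{V}_{\text{full}}(\mu)$, and combined with the reverse inequality above we get equality.

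For ($\Rightarrow$), argue by contraposition. Suppose $S=\mathrm{Supp}(\mu)\notin\mathcal{F}$, so there is $\eta\in\Delta S$ with $\underline{V}(\eta)<\underline{V}_{\text{full}}(\eta)$. Since $\mathrm{Supp}(\eta)\subseteq\mathrm{Supp}(\mu)$, for small $\lambda>0$ we can write $\mu=\lambda\eta+(1-\lambda)\nu$ with $\nu\in\Delta\Omega$; indeed $\nu=(\mu-\lambda\eta)/(1-\lambda)\ge0$ once $\lambda\le\min_{s\in S}\mu(s)/\eta(s)$. Nature then uses a signal that splits $\mu$ into $\eta$ with probability $\lambda$ and into $\nu$ with probability $1-\lambda$, and afterwards fully reveals the state conditional on $\nu$. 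The resulting payoff is
\[
\lambda\underline{V}(\eta)+(1-\lambda)\underline{V}_{\text{full}}(\nu)<\lambda\underline{V}_{\text{full}}(\eta)+(1-\lambda)\underline{V}_{\text{full}}(\nu)=\underline{V}_{\text{full}}(\mu).
\]
Note that $\underline{V}(\delta_\omega)=\underline{V}_{\text{full}}(\delta_\omega)$, so full revelation after $\nu$ gives exactly $\underline{V}_{\text{full}}(\nu)$. Hence $\underaccent{\check}{V}(\mu)<\underline{V}_{\text{full}}(\mu)$.

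The main obstacle is the construction in ($\Rightarrow$): Nature must be able to compose a two-step signal (split, then reveal) within a single finite information structure. This is handled by merging messages, taking $M=\{\eta\}\cup\Omega$ with the splitting probabilities set via Bayes' rule. Everything else is linearity of $\underline{V}_{\text{full}}$ and (BP).
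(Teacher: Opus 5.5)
Your proof is correct, and it is self-contained where the paper has no proof of its own: the paper defers this lemma entirely to \cite{dworczak2022preparing}, re-proving only Lemma 1 because its formulation differs slightly from theirs. Your argument has three parts. First, you reduce membership in $\mathcal{W}$ to the pointwise equality $\underaccent{\check}{V}(\mu)=\underline{V}_{\text{full}}(\mu)$ on $\mathrm{Supp}(\tau)$, using $\underaccent{\check}{V}\leq\underline{V}_{\text{full}}$, linearity of $\underline{V}_{\text{full}}$ and (BP). Second, sufficiency follows by averaging over Nature's posteriors, which stay in $\Delta(\mathrm{Supp}(\mu))$. Third, necessity follows from a split-then-reveal deviation. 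This is, in substance, presumably the logic behind Dworczak and Pavan's characterization. Relative to the citation, your route checks the lemma directly against the paper's own definitions. In particular, $\underaccent{\check}{V}$ is an infimum over structures in $\Pi$ with arbitrary finite message sets, not the fixed, sufficiently large ones of Dworczak and Pavan noted in the paper's footnote. Your deviation is an explicit element of $\Pi$ with message set $\{\eta\}\cup\Omega$. The argument also uses nothing about the specific payoffs $u,v$, so it holds for any finite $\Omega$ and $A$. The citation, in turn, buys brevity and places the result in their broader theory. One cosmetic fix: the bound on $\lambda$ should be $\min_{s\in\mathrm{Supp}(\eta)}\mu(s)/\eta(s)$, since the ratio is undefined where $\eta(s)=0$, and you also need $\lambda<1$; your ``small $\lambda$'' already covers both.
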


\subsection{Main results}
We characterize the worst-case optimal solutions using Lemma \ref{Lem_test}.
The proofs are relegated to the Appendix.
\begin{proposition}\label{Prop_Gb}
$\{(G,g),(B,b)\},\Omega\setminus\{(B,g)\},\Omega\setminus\{(G,b)\},\Omega\not\in\mathcal{F}$.
\end{proposition}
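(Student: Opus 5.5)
The plan is to reduce all four claims to the single claim $\{(G,g),(B,b)\}\notin\mathcal{F}$, and then to verify that claim by exhibiting one belief that violates the defining inequality of $\mathcal{F}$ in Lemma \ref{Lem_test}.

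\emph{Monotonicity.} If $S\subseteq S'$, then $\Delta S\subseteq\Delta S'$. So any belief $\mu\in\Delta S$ with $\underline{V}(\mu)<\underline{V}_{\text{full}}(\mu)$ also lies in $\Delta S'$. Hence $S\notin\mathcal{F}$ implies $S'\notin\mathcal{F}$. The sets $\Omega\setminus\{(B,g)\}$, $\Omega\setminus\{(G,b)\}$ and $\Omega$ all contain both $(G,g)$ and $(B,b)$. Therefore it suffices to prove $S:=\{(G,g),(B,b)\}\notin\mathcal{F}$.

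\emph{Degenerate beliefs.} At $\delta_{(G,g)}$, FV's payoff from approval is $u(1,(G,g))=x+m>0$. So $A^*=\{1\}$ and $\underline{V}(\delta_{(G,g)})=x$. At $\delta_{(B,b)}$, approval yields $-y-n<0$. So $A^*=\{0\}$ and $\underline{V}(\delta_{(B,b)})=0$. For $\mu\in\Delta S$ with $\mu((G,g))=p$, this gives
\[
\underline{V}_{\text{full}}(\mu)=px.
\]

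\emph{Interior beliefs.} FV's expected payoff from approval at such $\mu$ is $p(x+m)-(1-p)(y+n)$. This is strictly negative exactly when $p<p^*:=(y+n)/(x+m+y+n)$, and $p^*\in(0,1)$ because $x,y,m,n>0$.
\begin{itemize}
\item For any $p\in(0,p^*)$, FV uniquely rejects. Then $\underline{V}(\mu)=0<px=\underline{V}_{\text{full}}(\mu)$, which violates $\underline{V}\mid_{\Delta S}\geq\underline{V}_{\text{full}}\mid_{\Delta S}$.
\item For $p\in[p^*,1)$ the inequality also fails: if approval is chosen, CV gets $px-(1-p)y<px$, and if FV is indifferent, the worst-case tie-break gives at most that value. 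This case is not needed for the argument.
\end{itemize}
Thus $S\notin\mathcal{F}$, and the monotonicity step delivers the rest of the statement.

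No step is a real obstacle. The only care needed is the definitional bookkeeping: $\mathcal{F}$ requires the inequality on \emph{all} of $\Delta S$, so a single violating belief suffices, and the tie-breaking in $\underline{V}$ must be the CV-worst one. Choosing a belief with $p$ strictly below the threshold $p^*$ makes FV's choice unique, so tie-breaking never arises.
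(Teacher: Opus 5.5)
Your proof is correct, but it takes a different route from the paper's.

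\textbf{The paper's route.} The paper handles the four sets one at a time. For each set it builds a witness belief whose support is exactly that set:
\begin{itemize}
\item $\mu(G,g)=\epsilon$, $\mu(B,b)=1-\epsilon$ for the pair $\{(G,g),(B,b)\}$;
\item $\mu(B,b)=1-\epsilon$ for $\Omega\setminus\{(B,g)\}$;
\item layered weights $\epsilon,\epsilon^2,\epsilon^3$ for $\Omega\setminus\{(G,b)\}$ and $\Omega$.
\end{itemize}
The layered weights make $T(\mu)<0$ while the possibly negative term $-\mu(B,g)1_{m\geq y}y$ in $\underline{V}_{\text{full}}$ is dominated by $\mu(G,g)x$.

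\textbf{Your route.} You observe that $\mathcal{F}$ is closed under subsets. If $S\subseteq S'$, then $\Delta S\subseteq\Delta S'$ as subsets of $\Delta\Omega$, and membership requires the inequality at every point of $\Delta S$. So the single two-point witness settles all four claims. Your explicit threshold $p^*=(y+n)/(x+m+y+n)$ is just a sharper version of the paper's ``sufficiently small $\epsilon$.''

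\textbf{What each buys.} Your argument is shorter and avoids the multi-scale $\epsilon$ bookkeeping. It also makes precise what the paper only says informally after the proposition: pooling $(G,g)$ with $(B,b)$ is the sole reason all four sets fail. The same monotonicity would give Theorem \ref{Th_main} directly from the first and fourth bullets of Proposition \ref{Prop_main}, rather than by enumerating $\mathcal{F}$.

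The paper's full-support witnesses buy robustness to reading $\Delta S$ as beliefs with support exactly $S$. Under that reading your inclusion step would need a small perturbation argument, which is easy because $T(\mu)<0$ and $\underline{V}_{\text{full}}(\mu)>0$ are strict inequalities. Under the paper's stated definition, $\Delta S$ is the whole simplex over $S$ viewed inside $\Delta\Omega$, so that extra work is unnecessary.
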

This proposition states that, regardless of the parameters, information structures that pool $(G,g)$ and $(B,b)$ are never worst-case optimal.
Intuitively, if $(G,g)$ and $(B,b)$ are pooled, Nature easily sways FV's decision by exaggerating the probability of $(G,g)$ or $(B,b)$, which strongly incentivizes FV to approve or reject, respectively.
Therefore, such pooling is never optimal when Nature adversarially sways FV's decision.

\begin{proposition}\label{Prop_main}
The following holds:
\begin{itemize}
\item $\{(G,g),(G,b)\}\in\mathcal{F}$ if and only if $x\geq n$.
\item $\{(G,g),(B,g)\}\in\mathcal{F}$ if and only if $y\leq m$.
\item $\{(G,b),(B,g)\}\in\mathcal{F}$ if and only if $x\leq n$ or $y\leq m$.
\item $\{(B,g),(B,b)\}\in\mathcal{F}$ if and only if $y\geq m$.
\item $\{(G,b),(B,b)\}\in\mathcal{F}$ if and only if $x\leq n$.
\item $\Omega\setminus\{(B,b)\}\in\mathcal{F}$ if and only if $x\geq n$ and $y\leq m$.
\item $\Omega\setminus\{(G,g)\}\in\mathcal{F}$ if and only if $x\leq n$ and $y\geq m$.
\end{itemize}
\end{proposition}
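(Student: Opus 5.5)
The plan is to check the defining inequality of $\mathcal{F}$ from Lemma \ref{Lem_test} directly, that is, $\underline{V}(\mu)\ge\underline{V}_{\text{full}}(\mu)$ for every $\mu\in\Delta S$. Both sides are explicit, so the first step is to record closed forms.

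\textbf{Closed forms.} At degenerate beliefs, using worst-case tie-breaking:
\begin{itemize}
\item $\underline{V}(\delta_{(G,g)})=x$ and $\underline{V}(\delta_{(B,b)})=0$;
\item $\underline{V}(\delta_{(G,b)})=x$ if $x>n$, and $0$ if $x\le n$ (at a tie, rejection is worse for CV);
\item $\underline{V}(\delta_{(B,g)})=-y$ if $m\ge y$ (at a tie, approval is worse for CV), and $0$ if $m<y$.
\end{itemize}
Hence $\underline{V}_{\text{full}}$ is linear in $\mu$ with these coefficients.

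For a general $\mu$, write $U(\mu)=\sum_\omega u(1,\omega)\mu(\omega)$ and $V_1(\mu)=\sum_\omega v(1,\omega)\mu(\omega)$. Then $\underline{V}(\mu)=V_1(\mu)$ when $U(\mu)>0$, $\underline{V}(\mu)=0$ when $U(\mu)<0$, and $\underline{V}(\mu)=\min\{V_1(\mu),0\}$ when $U(\mu)=0$.

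Subtracting gives the two differences I will work with:
\begin{itemize}
\item on the approval region, $V_1-\underline{V}_{\text{full}}=x\mu(G,b)\mathbf{1}[x\le n]-y\mu(B,g)\mathbf{1}[m<y]-y\mu(B,b)$;
\item on the rejection region, $-\underline{V}_{\text{full}}=-x\mu(G,g)-x\mu(G,b)\mathbf{1}[x>n]+y\mu(B,g)\mathbf{1}[m\ge y]$.
\end{itemize}
Every item of the proposition reduces to asking, on $\Delta S$, whether the approval region avoids the states carrying negative coefficients in the first expression, and whether the rejection region avoids the states carrying negative coefficients in the second.

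\textbf{Two-state sets.} Here $\Delta S$ is a segment parametrized by $p\in[0,1]$ and $U$ is affine in $p$, so each case is a one-line sign check. For example, take $S=\{(G,g),(G,b)\}$. Then $V_1=x>0$ and $U=x+mp-n(1-p)$.
\begin{itemize}
\item If $x\ge n$, FV approves for all $p>0$. The approval-region difference $x(1-p)\mathbf{1}[x\le n]$ is $\ge 0$. At $p=0$ with $x=n$, both sides equal $0$.
\item If $x<n$, FV rejects for small $p>0$, where $\underline{V}=0<xp=\underline{V}_{\text{full}}$.
\end{itemize}
The sets $\{(G,g),(B,g)\}$, $\{(B,g),(B,b)\}$, $\{(G,b),(B,b)\}$ and $\{(G,b),(B,g)\}$ are handled the same way. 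In each, the violation (when it exists) appears near a vertex whose degenerate value is determined by the comparison $x$ vs.\ $n$ or $y$ vs.\ $m$. For $\{(G,b),(B,g)\}$, I will split into four cases by the signs of $x-n$ and $m-y$. The only failing case should be $x>n$ and $y>m$: then FV approves near $(G,b)$ and rejects near $(B,g)$, so at the crossing point $\underline{V}$ jumps while $\underline{V}_{\text{full}}$ is linear, producing a violation on one side.

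\textbf{Three-state sets.} Necessity follows from faces: $\Delta S'\subseteq\Delta S$ for $S'\subseteq S$, so $S\in\mathcal{F}$ forces every subset to be in $\mathcal{F}$.
\begin{itemize}
\item For $\Omega\setminus\{(B,b)\}$, this yields $x\ge n$ and $y\le m$ from the pairs $\{(G,g),(G,b)\}$ and $\{(G,g),(B,g)\}$.
\item For $\Omega\setminus\{(G,g)\}$, it yields $x\le n$ and $y\ge m$ from $\{(G,b),(B,b)\}$ and $\{(B,g),(B,b)\}$.
\end{itemize}
For sufficiency in the first case, with $x\ge n$ and $m\ge y$ all coefficients of $U$ on $S$ are nonnegative and that of $(G,g)$ is strictly positive. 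So the rejection region is empty and $U=0$ only on the face $\{(G,b),(B,g)\}$ with $x=n$, $m=y$, which is already covered. On the approval region the difference equals $x\mu(G,b)\mathbf{1}[x\le n]\ge0$. The second case is symmetric: all coefficients of $U$ are nonpositive, so the approval region is empty except at ties, and the rejection difference equals $y\mu(B,g)\mathbf{1}[m\ge y]\ge 0$.

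The main obstacle I expect is bookkeeping at the boundary cases $x=n$ or $y=m$, where worst-case tie-breaking on $U=0$ must be checked against the indicator conventions in $\underline{V}_{\text{full}}$. The $\{(G,b),(B,g)\}$ case also needs care, because its crossing point $U=0$ lies in the interior.
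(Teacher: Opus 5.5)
Your proof is correct. For the two-element sets it is essentially the paper's argument: you express $\underline{V}$ through the sign of $T$ (your $U$), use the closed form of $\underline{V}_{\text{full}}$, and exhibit a violating belief near a vertex. Your two linear difference forms simply package the paper's case-by-case comparisons.

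The genuine difference is in the three-element sets. For necessity, the paper constructs new beliefs with full support on $S$ (e.g.\ $\mu(G,g)=\epsilon$, $\mu(B,g)=\epsilon^2$, or $\mu(G,b)=\epsilon^2$, $\mu(B,b)=\epsilon$) and recomputes $T$ in each parameter case. You instead note that $S'\subseteq S$ implies $\Delta S'\subseteq\Delta S$, so $\mathcal{F}$ is closed under subsets and necessity follows in one line from the two-element faces.

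Your route also makes explicit something the paper leaves tacit on the sufficiency side. The paper only checks beliefs with $\mathrm{Supp}(\mu)=S$ (where, e.g., $\mu(G,g)>0$ forces $T>0$). So the boundary of $\Delta S$, including the tie points at $x=n$ or $m=y$, is covered only because the faces are already known to be in $\mathcal{F}$; you invoke this deliberately. What the paper's constructions buy is a full-support witness for each item, at the cost of several extra $\epsilon$-computations.

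One part of your heuristic is wrong, although the conclusion it supports is right. For $\{(G,b),(B,g)\}$ with $x>n$ and $y>m$, the violation is not caused by a jump of $\underline{V}$ at the crossing point.
\begin{itemize}
\item There need not be a jump. On this segment, $T$ and $V_1$ vanish at the same point exactly when $mx=ny$ (e.g.\ $x=y=2$, $m=n=1$, both zero at $\mu(G,b)=1/2$), and then $\underline{V}$ is continuous.
\item A jump does not force a violation. When $x<n$ and $m>y$, FV approves near $(B,g)$ and rejects near $(G,b)$, and $\underline{V}$ generically jumps, yet the set is in $\mathcal{F}$.
\end{itemize}
What matters is orientation. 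With approval at $(G,b)$ and rejection at $(B,g)$, $\underline{V}_{\text{full}}$ (the chord from $x$ to $0$) lies strictly above both branches on the whole open segment: above $V_1$ (the chord from $x$ to $-y$) and above $0$. Your own rejection-region formula shows this directly: where $T<0$ on this segment it equals $-x\mu(G,b)<0$, which is the paper's witness $\mu(G,b)=\epsilon$.

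A similarly minor point: in the sufficiency step for $\Omega\setminus\{(B,b)\}$, ties also occur at $\delta_{(G,b)}$ when only $x=n$, and at $\delta_{(B,g)}$ when only $m=y$, not just when both equalities hold. These points lie in the face $\{(G,b),(B,g)\}$, so your ``already covered'' conclusion stands.
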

Since
\[
\{(G,g)\},\{(G,b)\},\{(B,g)\},\{(B,b)\}\in\mathcal{F}
\]
for any parameters,
Proposition \ref{Prop_Gb} and \ref{Prop_main} provide a complete characterization of $\mathcal{F}$,
which implies the following important result.
\begin{theorem}\label{Th_main}
The following holds:
\begin{itemize}
\item If $n>x$, then, for any $S\subseteq\Omega$ such that $(G,g),(G,b)\in S$, it holds that $S\not\in\mathcal{F}$.
\item If $m>y$, then, for any $S\subseteq\Omega$ such that $(B,g),(B,b)\in S$, it holds that $S\not\in\mathcal{F}$.
\end{itemize}
\end{theorem}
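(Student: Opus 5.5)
The plan is to read Theorem \ref{Th_main} off the complete characterization of $\mathcal{F}$ in Propositions \ref{Prop_Gb} and \ref{Prop_main}, together with the fact that all singletons lie in $\mathcal{F}$. Because $\Omega$ has only four states, the subsets $S$ that contain a given pair of states can be listed explicitly. I will then check each listed subset against those two propositions.

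For the first bullet, assume $n>x$. The subsets $S$ with $(G,g),(G,b)\in S$ are:
\begin{itemize}
\item $\{(G,g),(G,b)\}$, which is not in $\mathcal{F}$ by the first item of Proposition \ref{Prop_main}, since $x\geq n$ fails.
\item $\Omega\setminus\{(B,b)\}$, which is not in $\mathcal{F}$ by the sixth item of Proposition \ref{Prop_main}, since that item again requires $x\geq n$.
\item $\Omega\setminus\{(B,g)\}$, which is excluded by Proposition \ref{Prop_Gb}.
\item $\Omega$, which is also excluded by Proposition \ref{Prop_Gb}.
\end{itemize}
These four sets exhaust the cases, so the first bullet follows.

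For the second bullet, assume $m>y$. The subsets containing $(B,g),(B,b)$ are:
\begin{itemize}
\item $\{(B,g),(B,b)\}$, which is not in $\mathcal{F}$ by the fourth item of Proposition \ref{Prop_main}, since $y\geq m$ fails.
\item $\Omega\setminus\{(G,g)\}$, which is not in $\mathcal{F}$ by the seventh item of Proposition \ref{Prop_main}, since that item requires $y\geq m$.
\item $\Omega\setminus\{(G,b)\}$, which is excluded by Proposition \ref{Prop_Gb}.
\item $\Omega$, which is excluded by Proposition \ref{Prop_Gb}.
\end{itemize}
This settles the second bullet.

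There is essentially no obstacle once the propositions are taken as given; the theorem is a corollary of them. The only care needed is to confirm that the enumeration is exhaustive. A set containing two fixed states is determined by which of the two remaining states it includes, so there are exactly $2^2=4$ such sets in each case, matching the lists above. One should also confirm that the strict inequality in each hypothesis really violates the weak inequality in each corresponding ``if and only if'' condition. For instance, $n>x$ rules out $x\geq n$, and $m>y$ rules out $y\geq m$. If one wanted a self-contained argument not relying on Proposition \ref{Prop_main}, the hardest step would be exhibiting a belief $\mu\in\Delta S$ with $\underline{V}(\mu)<\underline{V}_{\text{full}}(\mu)$. For $S=\{(G,g),(G,b)\}$ with $n>x$, I would take $\mu$ concentrated near $(G,b)$: there FV strictly rejects because $x-n<0$, so $\underline{V}(\mu)=0$, while full revelation gives $\underline{V}_{\text{full}}(\mu)=\mu(G,g)\,x>0$.
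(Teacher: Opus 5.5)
Your proof is correct and follows the paper's route. The paper also obtains Theorem~\ref{Th_main} as an immediate corollary of the complete characterization of $\mathcal{F}$ in Propositions~\ref{Prop_Gb} and~\ref{Prop_main}; your enumeration of the four supersets of each fixed pair, checking that the strict hypotheses violate the weak inequalities in the relevant items, makes that step explicit.

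A small shortcut: if $S'\subseteq S$ then $\Delta S'\subseteq\Delta S$, so $\mathcal{F}$ is closed under taking subsets. Hence the first and fourth items of Proposition~\ref{Prop_main} alone already give both bullets.
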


This theorem provides a key implication:
if $m$ or $n$ is large, i.e., FV is more concerned about the irrelevant part $\Omega_I$ of the state, the worst-case optimal information structure for CV must separate $g$ and $b$ conditional on $B$ or $G$, respectively.
This means that, when CV pessimistically evaluates the additional information, CV may need to acquire information about the irrelevant part $\Omega_I$.

\begin{table}[t]
    \centering
    \caption{Worst-case optimal separations with the minimum number of messages.
    For instance, if $x\geq n$ and $y\geq m$, an information structure that separates the state space into $\{(G,g),(G,b)\}$ and $\{(B,g),(B,b)\}$ is worst-case optimal solution.}
    \begin{tabular}{|c|c|c|} \hline
         &$y\geq m$&$y<m$  \\ \hline
         $x\geq n$&$\{(G,g),(G,b)\},~~\{(B,g),(B,b)\}$&$\{(B,b)\},~~\Omega\setminus\{(B,b)\}$\\ \hline
         $x<n$&$\{(G,g)\},~~\Omega\setminus\{(G,g)\}$&$\{(G,g),(B,g)\},~~\{(G,b),(B,b)\}$ \\ \hline
    \end{tabular}
    \label{Tab_main}
\end{table}

Precisely, first, pick an interim belief $\mu$ such that $(G,g),(G,b)\in\mathrm{Supp}(\mu)$.
In this case, if the relevant part $\omega_R$ is likely to be $G$, the voter should approve.
However, if $n$ is large, Nature can sway this decision by exaggerating the probability that the irrelevant part $\omega_I$ is $b$.
In this sense, when $n$ is large, FV is susceptible to adversarial information that triggers a type-II error: refusing a good candidate.
To avoid this, the worst-case optimal solutions separate the irrelevant part $g$ and $b$ conditional on $G$.
In the same way, if $m$ is large, FV is susceptible to information that triggers a type-I error: approving a bad candidate.
To avoid this, the worst-case optimal solutions separate $g$ and $b$ conditional on $B$.

\textbf{Discussion.}
Table \ref{Tab_main} provides examples of the worst-case optimal solutions.
Importantly, for any parameters, there exists a simple worst-case solution  which uses only two messages. 

This observation tells us how the ambiguity can be mitigated with minimal information acquisition. 
For instance, if $x\geq n$ and $y<m$, separating $\{(B,b)\}$ from the other states is worst-case optimal.
That is, if FV is susceptible only to $g$, it suffices to acquire only the evidence of $(B,b)$ such as the news of corruption.\footnote{This is inspired by \cite{rienks2023corruption}, who uses corruption as an example of events that damage both the competency evaluation and the public image of politicians, which correspond to the relevant and irrelevant parts, respectively, in this paper.}
In a more extreme case, if both $m$ and $n$ are large, acquiring only irrelevant information can be rationalized.

\subsection*{Declaration of competing interests}
The authors declare that they have no competing financial interests or personal relationships that could have appeared to influence the work reported in this paper.

\section*{Appendix}

\subsection*{Omitted Proofs}
Note first that FV's expected payoff from the approval under each belief $\mu\in\Delta\Omega$ is given by
\begin{align*}
T(\mu)&=\sum_{\omega\in\Omega}u(1,\omega)\mu(\omega)\\
&=\mu(G,g)(x+m)+\mu(G,b)(x-n)+\mu(B,g)(m-y)-\mu(B,b)(y+n)\\
&=\left[\mu(G,g)+\mu(G,b)\right](x+y)+\left[\mu(G,g)+\mu(B,g)\right](m+n)-y-n,
\end{align*}
where the last equation follows from 
\[
\mu(B,b)=1-\mu(G,g)-\mu(G,b)-\mu(B,g).
\]
Then, it follows that
\begin{equation}\label{Eq_V_full}
\underline{V}_{\text{full}}(\mu)=\mu(G,g)x+\mu(G,b)1_{x>n}x-\mu(B,g)1_{m\geq y}y    
\end{equation}
and 
\begin{equation}\label{Eq_V_underbar}
\underline{V}(\mu)=\begin{cases}
    [\mu(G,g)+\mu(G,b)]x-[\mu(B,g)+\mu(B,b)]y,&\text{if }T(\mu)>0,\\
    \min\left\{0,[\mu(G,g)+\mu(G,b)]x-[\mu(B,g)+\mu(B,b)]y\right\}, &\text{if } T(\mu)=0,\\
    0, &\text{if } T(\mu)<0.
\end{cases}
\end{equation}

We show the propositions by testing whether each subset of the state space $\Omega$ belongs to $\mathcal{F}$.
To this end, we compare $\underline{V}_{\text{full}}$ and $\underline{V}$ below.
Calculation results are summarized in Table \ref{Tab_summary}.

\begin{table}[t]
    \centering
    \caption{Calculations of $\underline{V}_{\text{full}}(\mu)$ and $T(\mu)$ for belief $\mu$ supported by each nonsingleton subset of the state space $\Omega$.}
    \begin{tabular}{c|c|c}
         $\mathrm{Supp}(\mu)$&$\underline{V}_{\text{full}}(\mu)$&$T(\mu)$  \\\hline
         $\{(G,g),(G,b)\}$&$\substack{\mu(G,g)x\\+\mu(G,b)1_{x>n}x}$&$\substack{x-n\\+\mu(G,g)(m+n)}$\\ \hline
         $\{(G,g),(B,g)\}$&$\substack{\mu(G,g)x\\-\mu(B,g)1_{m\geq y}y}$&$\substack{m-y\\+\mu(G,g)(x+y)}$\\ \hline
         $\{(G,g),(B,b)\}$&$\mu(G,g)x$&$\substack{\mu(G,g)(x+y+m+n)\\-y-n}$\\ \hline
        $\{(G,b),(B,g)\}$&$\substack{\mu(G,b)1_{x>n}x\\-\mu(B,g)1_{m\geq y}y}$&$\substack{\mu(G,b)(x+y)\\+\mu(B,g)(m+n)-y-n}$\\ \hline
        $\{(G,b),(B,b)\}$&$\mu(G,b)1_{x>n}x$&$\substack{\mu(G,b)(x+y)\\-y-n}$\\ \hline
        $\{(B,g),(B,b)\}$&$-\mu(B,g)1_{m\geq y}y$&$\substack{\mu(B,g)(m+n)\\-y-n}$\\ \hline
        $\Omega\setminus\{(B,b)\}$&$\substack{\mu(G,g)x+\mu(G,b)1_{x>n}x\\-\mu(B,g)1_{m\geq y}y}$&$\substack{[\mu(G,g)+\mu(G,b)](x+y)\\+[\mu(G,g)+\mu(B,g)](m+n)-y-n}$\\ \hline
        $\Omega\setminus\{(B,g)\}$&$\substack{\mu(G,g)x\\+\mu(G,b)1_{x>n}x}$&$\substack{[\mu(G,g)+\mu(G,b)](x+y)\\+\mu(G,g)(m+n)-y-n}$\\ \hline
        $\Omega\setminus\{(G,b)\}$&$\substack{\mu(G,g)x\\-\mu(B,g)1_{m\geq y}y}$&$\substack{\mu(G,g)(x+y)\\+[\mu(G,g)+\mu(B,g)](m+n)-y-n}$\\ \hline
        $\Omega\setminus\{(G,g)\}$&$\substack{\mu(G,b)1_{x>n}x\\-\mu(B,g)1_{m\geq y}y}$&$\substack{\mu(G,b)(x+y)\\+\mu(B,g)(m+n)-y-n}$\\ \hline
        $\Omega$&$\substack{\mu(G,g)x+\mu(G,b)1_{x>n}x\\-\mu(B,g)1_{m\geq y}y}$&$\substack{[\mu(G,g)+\mu(G,b)](x+y)\\+[\mu(G,g)+\mu(B,g)](m+n)-y-n}$\\ \hline
    \end{tabular}
    \label{Tab_summary}
\end{table}

\subsubsection*{Proof of Proposition \ref{Prop_Gb}}
First, to show that $\{(G,g),(B,b)\}\not\in\mathcal{F}$, it suffices to show that there exists a belief $\mu$ such that $\mathrm{Supp}(\mu)=\{(G,g),(B,b)\}$ and $\underline{V}_{\text{full}}(\mu)>\underline{V}(\mu)$.
Such a belief $\mu$ is found as follows.
Let $\mu(G,g)=\epsilon$ and $\mu(B,b)=1-\epsilon$ for sufficiently small $\epsilon>0$.
Then, it follows that
\[
T(\mu)=\epsilon(x+y+m+n)-y-n<0
\]
and hence 
\[
\underline{V}_{\text{full}}(\mu)=\epsilon x>0=\underline{V}(\mu).
\]

Next, to show that $\Omega\setminus\{(B,g)\}\not\in\mathcal{F}$, let $\mu$ be a belief such that $\mathrm{Supp}(\mu)=\Omega\setminus\{(B,g)\}$ and $\mu(B,b)=1-\epsilon$ for sufficiently small $\epsilon>0$.
Then, it holds that $T(\mu)<0$ and hence 
\[
\underline{V}_{\text{full}}(\mu)=\mu(G,g)x+\mu(G,b)1_{x>n}>0=\underline{V}(\mu).
\]

Next, to show that $\Omega\setminus\{(G,b)\}\not\in\mathcal{F}$, let $\mu$ be a belief such that $\mathrm{Supp}(\mu)=\Omega\setminus\{(G,b)\}$, $\mu(G,g)=\epsilon$, and $\mu(B,g)=\epsilon^2$ for sufficiently small $\epsilon$. 
Then, it holds that $T(\mu)<0$ and hence 
\[
\underline{V}_{\text{full}}(\mu)=\epsilon x -\epsilon^2 1_{m\geq y}y>0=\underline{V}(\mu).
\]

Finally, to show that $\Omega\not\in\mathcal{F}$, let $\mu$ be a belief such that $\mathrm{Supp}(\mu)=\Omega$, $\mu(G,g)=\epsilon$, $\mu(G,b)=\epsilon^2$, $\mu(B,g)=\epsilon^3$, and $\mu(B,b)=1-\epsilon-\epsilon^2-\epsilon^3$ for sufficiently small $\epsilon>0$. 
Then, it holds that $T(\mu)<0$ and hence 
\begin{align*}
\underline{V}_{\text{full}}(\mu)&=\epsilon x +\epsilon^2 1_{x>n}x -\epsilon^3 1_{m\geq y}y \\
&>0\\
&=\underline{V}(\mu),
\end{align*}
which completes the proof.
\qed

\subsubsection*{Proof of Proposition \ref{Prop_main}}
First, suppose $\mathrm{Supp}(\mu)=\{(G,g),(G,b)\}$.
If $x\geq n$, it holds that $T(\mu)>0$ and hence
\[
\underline{V}(\mu)=x\geq \mu(G,g)x+\mu(G,b)1_{x>n}x=\underline{V}_{\text{full}}(\mu),
\]
which implies that $\{(G,g),(G,b)\}\in\mathcal{F}$.
If $x<n$, for some belief $\mu$ such that $\mu(G,g)=\epsilon$ for sufficiently small $\epsilon>0$, it holds that $T(\mu)<0$ and hence 
\[
\underline{V}_{\text{full}}(\mu)=\epsilon x>0=\underline{V}(\mu),
\]
which implies that $\{(G,g),(G,b)\}\not\in\mathcal{F}$.

Next, suppose $\mathrm{Supp}(\mu)=\{(G,g),(B,g)\}$.
If $m\geq y$, then $T(\mu)>0$, which implies that 
\[
\underline{V}(\mu)=\mu(G,g)x-\mu(B,g)y=\underline{V}_{\text{full}}(\mu)
\]
and hence $\{(G,g),(B,g)\}\in\mathcal{F}$.
If $m<y$, for a belief $\mu$ such that $\mu(G,g)=\epsilon$ for sufficiently small $\epsilon>0$, it holds that $T(\mu)<0$, which implies that 
\[
\underline{V}_{\text{full}}(\mu)=\mu(G,g)x>0=\underline{V}(\mu)
\]
and hence $\{(G,g),(B,g)\}\not\in\mathcal{F}$.

Next, suppose $\mathrm{Supp}(\mu)=\{(G,b),(B,g)\}$. 
If $x>n$ and $m\geq y$, then $T(\mu)>0$ and hence 
\[
\underline{V}(\mu)=\mu(G,b)x-\mu(B,g)y=\underline{V}_{\text{full}}(\mu),
\]
which implies that $\{(G,b),(B,g)\}\in\mathcal{F}$.
If $x\leq n$ and $m\geq y$, it holds that 
\[
\underline{V}(\mu)\geq -\mu(B,g)y=\underline{V}_{\text{full}}(\mu),
\]
which implies that $\{(G,b),(B,g)\}\in\mathcal{F}$.
If $x\leq n$ and $y>m$, it holds that $T(\mu)<0$ and hence 
\[
\underline{V}(\mu)=0=\underline{V}_{\text{full}}(\mu),
\]
which implies that $\{(G,b),(B,g)\}\in\mathcal{F}$.
If $x>n$ and $y>m$, for belief $\mu$ such that $\mu(G,b)=\epsilon$ for sufficiently small $\epsilon>0$, it holds that 
\begin{align*}
T(\mu)&=\epsilon(x+y) + (1-\epsilon)(m+n)-y-n\\
&=m-y+\epsilon(x+y-m-n)\\
&<0
\end{align*}
and hence
\[
\underline{V}_{\text{full}}(\mu)=\epsilon 1_{x>n}x>0=\underline{V}(\mu),
\]
which implies that $\{(G,b),(B,g)\}\not\in\mathcal{F}$.

Next, suppose $\mathrm{Supp}(\mu)=\{(G,b),(B,b)\}$.
If $x\leq n$, it holds that
\[
T(\mu)< x-n \leq 0
\]
and hence $\underline{V}(\mu)=0=\underline{V}_{\text{full}}(\mu)$, which implies that $\{(G,b),(B,b)\}\in\mathcal{F}$.
If $x>n$, for a belief $\mu$ such that $\mu(G,b)=\epsilon$ for sufficiently small $\epsilon>0$, it holds that $T(\mu)<0$ and hence 
\[
\underline{V}_{\text{full}}(\mu)=\mu(G,b)x>0=\underline{V}(\mu),
\]
which implies that $\{(G,b),(B,b)\}\not\in\mathcal{F}$.

Next, suppose $\mathrm{Supp}(\mu)=\{(B,g),(B,b)\}$.
If $y\geq m$, it holds that $T(\mu)< 0$ and hence 
\[
\underline{V}(\mu)=0\geq-\mu(B,g)1_{m\geq y}y=\underline{V}_{\text{full}}(\mu),
\]
which implies that $\{(B,g),(B,b)\}\in\mathcal{F}$.
If $m>y$, for belief $\mu$ such that $\mu(B,g)=1-\epsilon$ for sufficiently small $\epsilon>0$, it holds that $T(\mu)\geq 0$ and hence 
\[
\underline{V}_{\text{full}}(\mu)=-\mu(B,g)y>-y=\underline{V}(\mu),
\]
which implies that $\{(B,g),(B,b)\}\not\in\mathcal{F}$.

Next, suppose $\mathrm{Supp}(\mu)=\Omega\setminus\{(B,b)\}$.
If $x\geq n$ and $m\geq y$, it holds that
\begin{align*}
T(\mu)&=\mu(G,g)(x+y+m+n)+\mu(G,b)(x+y)+\mu(B,g)(m+n)-y-n \\
&>0
\end{align*}
and hence 
\begin{align*}
\underline{V}(\mu)&=[\mu(G,g)+\mu(G,b)]x-\mu(B,g)y\\
&\geq \underline{V}_{\text{full}}(\mu),
\end{align*}
which implies that $\Omega\setminus\{(B,b)\}\in\mathcal{F}$.
If $x<n$, for belief $\mu$ such that $\mu(G,g)=\epsilon$ and $\mu(B,g)=\epsilon^2$ for sufficiently small $\epsilon>0$, it holds that 
\begin{align*}
T(\mu)&=\epsilon(x+y+m+n)+(1-\epsilon-\epsilon^2)(x+y)+\epsilon^2(m+n)-y-n\\
&=x-n+\epsilon[m+n-\epsilon(x+y-m-n)]\\
&<0
\end{align*}
and hence 
\begin{align*}
\underline{V}_{\text{full}}(\mu)&=\epsilon x-\epsilon^2 1_{m\geq y}y\\
&>0\\
&=\underline{V}(\mu),
\end{align*}
which implies that $\Omega\setminus\{(B,b)\}\not\in\mathcal{F}$.
We can show that $y>m$ implies $\Omega\setminus \{(B,b)\}\not\in\mathcal{F}$ in the same way.

Finally, suppose $\mathrm{Supp}(\mu)=\Omega\setminus\{(G,g)\}$.
If $x\leq n$ and $y\geq m$, it holds that $T(\mu)<0$ and hence 
\[
\underline{V}(\mu)=0\geq
-\mu(B,g)1_{m\geq y}y=\underline{V}_{\text{full}}(\mu),
\] 
which implies that $\Omega\setminus\{(G,g)\}\in\mathcal{F}$.
If $x>n$ and $y\geq m$, for belief $\mu$ such that $\mu(G,b)=1-\epsilon$ for sufficiently small $\epsilon>0$, it holds that $T(\mu)>0$ and hence 
\begin{align*}
\underline{V}_{\text{full}}(\mu)&=\mu(G,b)x-\mu(B,g)1_{m\geq y}y\\
&>\mu(G,b)x-[\mu(B,g)+\mu(B,b)]y\\
&=\underline{V}(\mu),
\end{align*}
which implies that $\Omega\setminus\{(G,g)\}\not\in\mathcal{F}$.
If $x\leq n$ and $y< m$, for belief $\mu$ such that $\mu(G,b)=\epsilon^2$, $\mu(B,g)=1-\epsilon-\epsilon^2$, and $\mu(B,b)=\epsilon$ for sufficiently small $\epsilon>0$, it holds that $T(\mu)>0$ and hence 
\[
\underline{V}(\mu)=\epsilon^2 x- (1-\epsilon^2)y.
\]
Since
\[
\underline{V}_{\text{full}}(\mu)=-(1-\epsilon-\epsilon^2)y,
\]
it follows that 
\[
\underline{V}_{\text{full}}(\mu)-\underline{V}(\mu)=\epsilon y-\epsilon^2 x >0,
\]
which implies that $\Omega\setminus\{(G,g)\}\not\in\mathcal{F}$.
If $x>n$ and $y< m$, for belief $\mu$ such that $\mu(G,b)=1-\epsilon$ for sufficiently small $\epsilon$, it holds that $T(\mu)>0$ and hence 
\begin{align*}
\underline{V}_{\text{full}}(\mu)&=\mu(G,b)x-\mu(B,g)y\\
&>\mu(G,b)x-[\mu(B,g)+\mu(B,b)]y,
\end{align*}
which implies that $\Omega\setminus\{(G,g)\}\not\in\mathcal{F}$.\qed

\subsection*{Declaration of generative AI and AI-assisted technologies in the manuscript preparation process}
During the preparation of this work the authors used ChatGPT in order to improve the language and exposition of the manuscript. After using this tool, the authors reviewed and edited the content as needed and take full responsibility for the content of the published article.

\bibliographystyle{econ}
\bibliography{Literature}

\end{document}